\newtheorem{lemma}{Lemma}[section]
\newtheorem{theorem}{Theorem}
\theoremstyle{definition}
\theoremstyle{definition}
\theoremstyle{definition}
\global\let\AddToReset=\@addtoreset}
\newcommand{\cL}{{\mathcal L}}
\newcommand{\rd}{{\rm d}}
\newcommand{\ph}{\varphi}
\newcommand{\bx}{\mathbf{x}}
\newcommand{\cE}{{\mathcal E}}
\newcommand{\wt}{\widetilde}
\newcommand{\cH}{\mathcal{H}}
\newcommand{\bR}{{\mathbb R}}
\def\tr{\mathop{\rm tr}\nolimits} % trace
\title[Finite time blowup for the Hartree-Fock equation]{Stellar Collapse in the Time Dependent Hartree-Fock Approximation}
\begin{document}

\author{Christian Hainzl}
\address{Departments
  of Mathematics and Physics, UAB, % 1300 University
  %Blvd,
Birmingham AL 35294, USA} \email{hainzl@math.uab.edu}

\author{Benjamin Schlein}
\address{Institute of Mathematics, University of Munich,
Theresienstr. 39, D-80333 Munich, Germany}
\email{schlein@mathematik.uni-muenchen.de}

\begin{abstract}
We prove blow-up in finite time for radially symmetric solutions to the pseudo-relativistic Hartree-Fock equation with negative energy. The non-linear Hartree-Fock equation is commonly used in the physics literature to describe the dynamics of white dwarfs. We extend thereby recent results by Fr\"ohlich and Lenzmann, who established in \cite{FL1,FL2} blow-up for solutions to the pseudo-relativistic Hartree equation. As key ingredient for handling the exchange term we use the conservation of the expectation of the square of the angular momentum operator.
\end{abstract}

\maketitle

\section{Introduction}

According to the theory of Chandrasekhar \cite{chandra}, white dwarfs can be described by a model of electrically neutral atoms interacting through classical Newtonian gravitation. Atoms consist of nuclei, which are responsible for the main part of the potential energy, and electrons, which, on the other hand, give the leading contribution to the kinetic energy of the star. Because of local charge neutrality, we can assume that the space and momentum distributions of the nuclei coincide with the ones of the electrons; in this approximation, we only keep track of the electronic degrees of freedom. Considering a relativistic dispersion $E(p) = \sqrt{p^2 +m^2}$ for electrons with mass $m$, and assuming a single species of nuclei with mass $m_Z \gg m$ and charge $Ze$ (where $-e$ denotes the charge of the electron), this simplified model is described, on the microscopic level, by the quantum mechanical Hamiltonian
\begin{equation}\label{eq:ham} H_N = \sum_{j=1}^N \sqrt{-\Delta_{x_j} + m^2} -
\kappa \sum_{i<j}^N \frac{1}{|x_i - x_j|}\, \end{equation} where $N$ is the number of electrons and $\kappa = G m_Z^2/Z^2$ ($G$ denotes here the gravitational constant). We use units with $\hbar=c=1$.

Since the electron spin does not play an important role, we neglect it, and describe electrons by wave functions in $L^2 (\bR^3)$. In accordance with Pauli's principle, the Hamiltonian \eqref{eq:ham} acts then on the antisymmetric tensor product space $\cH_N = \bigwedge^N  \, L^2 (\bR^3, \rd x)$.

In \cite{LY1,LY2} Lieb and Yau derived Chandrasekhar equation for the ground state of the Hamiltonian \eqref{eq:ham} in the limit of large $N$ with $\kappa N^{2/3}$ kept fixed. Additionally they reproduced Chandrasekhar critical number of particles $N_c \sim O(\kappa^{-3/2})$ proving the instability of \eqref{eq:ham} for all $N > N_c$ (the white dwarf is supposed to undergo gravitational collapse for $N > N_c$). Up to a factor $4$ the correct value of $N_c$ had already been established by Lieb and Thirring in \cite{LT}. See \cite{LY1,LY2,LT,FL2} for a more thorough discussion on white dwarfs. In principle \eqref{eq:ham} can also be used to describe neutron stars; however, in this case, a correct understanding of the collapse requires the inclusion of general relativity effects.

In the physical relevant regime of very small $\kappa$ and very large $N$, one expects the ground state of (\ref{eq:ham}) to be approximated by a Slater determinant
\[ \left( \psi_1 \wedge \psi_2 \wedge \dots \wedge \psi_N \right)
(\bx) = \frac{1}{\sqrt{N!}} \sum_{ \pi \in S_N} \sigma_{\pi} \psi_1
(x_{\pi 1}) \psi_2 (x_{\pi 2}) \dots \psi_N (x_{\pi N}) \] of $N$
orthonormal one-particle wave functions $\psi_j \in L^2 (\bR^3)$
(here the sum runs over all permutations of the $N$ particles;
moreover, $\sigma_{\pi} = 1$ if the permutation $\pi$ is even while
$\sigma_{\pi} = -1$ if it is odd). It is simple to verify that the
energy of the Slater determinant $\wedge_{j=1}^N \psi_j$ is given by
the so called Hartree-Fock functional
\begin{equation}\label{eq:HFen}
\begin{split}
\cE_{{\rm HF}} (\{ \psi_j \}_{j=1}^N) = \; &\sum_{j=1}^N \int \rd x
\, |(-\Delta +m^2)^{1/2} \, \psi_j (x)|^2 \\ &- \frac{\kappa}{2}
\sum_{i,j=1}^N \int \rd x \, \rd y \, \frac{|\psi_i (x)|^2 |\psi_j
(y)|^2 - \psi_i (x) \overline{\psi}_i (y)  \psi_j (y)
\overline{\psi}_j (x)}{|x-y|} \, .
\end{split}
\end{equation}

\medskip

%The relevance of the Hartree-Fock theory for the description of the ground state properties of (\ref{eq:ham}) has been rigorously established by Lieb and Yau; see \cite{LY1,LY2}.
Within the range of its applicability,
one also expects the Hartree-Fock theory to describe the
time-evolution of Slater determinants. In other words, one expects
that, in an appropriate sense, and in a suitable limit of large $N$
and small $\kappa$, \[ e^{-iH_N t} \left( \psi_1 \wedge \dots \wedge
\psi_N \right) \simeq \psi_1 (t) \wedge \psi_2 (t) \wedge \dots
\wedge \psi_N (t) \] where the wave functions evolve according to
the time-dependent Hartree-Fock equation
\begin{equation}\label{eq:HF1}
i\partial_t \psi_{j} = \sqrt{-\Delta + m^2} \, \psi_{j} - \kappa \sum_{i=1}^N \left( \frac{1}{|.|} * |\psi_{i}|^2 \right) \psi_{j} + \kappa \sum_{i=1}^N \left(\frac{1}{|.|} * \psi_{j} \overline{\psi}_{i} \right) \psi_{i}\,.
\end{equation}
Note that this system of non-linear equations, which can be formally obtained computing the variation of (\ref{eq:HFen}), preserves the orthonormality relations $\langle \psi_i (t), \psi_j (t) \rangle = \delta_{ij}$ and the energy $\cE_{\text{HF}}$.

\medskip

The Hartree-Fock theory can be formulated in a more compact form in terms of the orthogonal projection $Q = \sum_{j=1}^N |\psi_j \rangle \langle \psi_j|$ onto the subspace spanned by the wave functions $\{ \psi_j \}_{j=1}^N$. The Hartree-Fock energy (\ref{eq:HFen}) is given, in terms of $Q$ and its kernel $Q(x,y)$, by
\begin{equation}\label{eq:HFen2}
\cE_{{\rm HF}} (Q) = \tr \; \sqrt{-\Delta +m^2} \, Q -
\frac{\kappa}{2} \int \rd x \rd y \, \frac{Q(x,x) Q(y,y) -
|Q(x,y)|^2}{|x-y|} \,.
\end{equation}
Also the time-dependent Hartree-Fock system (\ref{eq:HF1}) can be translated into an evolution equation for the time dependent density $Q_t = \sum_{j=1}^N |\psi_j (t) \rangle \langle \psi_j (t)|$. It is easy to obtain the nonlinear Hartree-Fock equation
\begin{equation}\label{eq:HF2}
i\partial_t Q_t = \left[ (-\Delta+ m^2)^{1/2} - \kappa \left( \frac{1}{|.|} * \rho_{Q_t} \right) + \kappa R_{Q_t} , Q_t \right] \, ,
\end{equation}
where $\rho_{Q_t} (x) = Q_t (x,x)$ (we denote by $Q_t (x,y)$ the
kernel of the projection $Q$), and where the operator $R_{Q_t}$ is
defined by its kernel $R_{Q_t} (x,y) = Q_t (x,y)/|x-y|$. By
construction, it is clear that (\ref{eq:HF2}) preserves the trace $N
= \tr Q_t$, and the energy (\ref{eq:HFen2}). In the present paper we
are interested in solutions to the nonlinear Hartree-Fock equation
(\ref{eq:HF2}); in particular we prove the existence of solutions to
(\ref{eq:HF2}) which exhibit blow up in finite time. Within the
framework of the Hartree-Fock approximation, the blow up of
solutions to (\ref{eq:HF2}) is interpreted as evidence for the
dynamical collapse of  white dwarfs.

\medskip

The last contribution in the commutator on the r.h.s. of (\ref{eq:HF2}) (the term containing the operator $R_Q$) is known as the exchange term (while the second contribution, containing the density $\rho_Q$, is known as the direct term). The presence of the exchange term is a consequence of the Pauli principle. Since, in the relevant limit of large $N$ and small $\kappa$, the exchange term is expected to be of smaller order compared with the direct term, it is often neglected in the physics literature. In this approximation one obtains the Hartree equation
\begin{equation}\label{eq:hartree} i\partial_t Q_t = \left[ (-\Delta+ m^2)^{1/2} - \kappa \left( \frac{1}{|.|} * \rho_{Q_t} \right), Q_t \right] \,.
\end{equation}
For bosonic systems (boson stars) this equation has been in fact rigorously derived from many body quantum dynamics in \cite{ES}.

Recently, blow-up in finite time has been proven to occur for solutions to the Hartree equation (\ref{eq:hartree}) by Fr\"ohlich and Lenzmann in \cite{FL1,FL2}. To obtain this result, they consider the non-negative observable \[ M = x \sqrt{-\Delta + m^2} \, x = \sum_{j=1}^3 x_j \sqrt{-\Delta +m^2}\, x_j \]  and they estimate the expectation value $\tr (M Q_t)$ where $Q_t$ is a solution to (\ref{eq:hartree}) with spherical symmetry (in the sense that $Q_t (Rx,Ry) = Q_t (x,y)$ for all $R \in SO(3)$). Under this assumption, they show that \begin{equation}\label{eq:obs0} \tr M Q_t \leq 2 t^2 \cE_{\text{Hartree}} (Q) + O(t) \end{equation} where $O(t)$ denotes error terms growing at most linearly in $t$. For initial data with negative energy $\cE_{\text{Hartree}} (Q) <0$, Eq. (\ref{eq:obs0}) leads to a contradiction to the non-negativity of the observable $M$ (choosing $t$ sufficiently large). This implies that the solution $Q_t$ cannot exist globally in time.

The spherical symmetry of the density $Q$ plays a very important role in the analysis developed by Fr\"ohlich and Lenzmann; it allows them to control error terms arising from the commutator of $M$ with the interaction $(|.|^{-1} * \rho_{Q_t})$ (the time derivative of $\tr \; M Q_t$ contains the term $\tr \; [M, (|.|^{-1}*\rho_{Q_t})] Q_t$). The same approach can be applied to solutions of the Hartree-Fock equation (\ref{eq:HF2}); it turns out, however, that the error terms arising from the commutator of $M$ with the exchange term $R_Q$ cannot be handled like the errors arising from the direct term. This is the reason why the approach of Fr\"ohlich and Lenzmann does not extend in a simple way to the Hartree-Fock equation (\ref{eq:HF2}) (the method does extend to the Hartree-Fock equation if one assumes not only that $Q$ is spherically symmetric, but also that each orbital in its decomposition $Q=\sum_{j=1}^N |\psi_j\rangle \langle \psi_j|$ is spherically symmetric; however, as pointed out in \cite{FL2}, this is a physically unnatural condition).

\medskip

In our analysis, we use the strategy of Fr\"ohlich and Lenzmann; we
study the evolution of the observable $M = x \, \sqrt{-\Delta + m^2}
\, x$ on spherically symmetric solutions $Q_t$ of (\ref{eq:HF2})
and we obtain a bound like (\ref{eq:obs0}). The novelty of our
approach lies in the estimate of the error terms arising from the
commutator of $M$ with the exchange term $R_Q$; to control this
error terms, we make use of the expectation of the square of the angular
momentum operator, which is a conserved quantity due to the radial
symmetry.

\section{The main result and its  proof}

The local well-posedness of the Hartree-Fock system (\ref{eq:HF1}) has been established by Fr\"ohlich and Lenzmann in \cite[Theorem 1]{FL2}. The well-posedness of (\ref{eq:HF2}) follows along the same line. For $s \geq 0$ we define the space
\[ \cH_s = \{ Q \in \cL^1 (L^2 (\bR^3)): \| Q \|_{\cH_s} < \infty \} \] with the norm
\[ \| Q \|_{\cH_s} = \tr \; \left| \, (1 - \Delta)^{s/2} Q (1-\Delta)^{s/2} \, \right| \, . \]
It turns out that (\ref{eq:HF2}) is locally well-posed in $\cH_s$ for all $s \geq 1/2$.
\begin{theorem}[Local Well-Posedness, \cite{FL2}]\label{thm:loc} Fix $s \geq 1/2$. For every orthogonal projection $Q\in \cH_s (L^2 (\bR^3))$ with $N (Q) = \tr \, Q < \infty$ there exists a maximal existence time $T >0$ and a unique solution $Q_t \in C ([0,T), \cH^s (L^2 (\bR^3)))$ of (\ref{eq:HF2}) with $Q_t \geq 0$, and $\tr Q_t = N$ for all $t \in [0,T)$. Moreover, if $T < \infty$, then $\| Q_t \|_s = \tr (1-\Delta)^s Q_t \to \infty$ as $t \to T^-$ (blow-up alternative).
\end{theorem}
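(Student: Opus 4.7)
\smallskip

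\textbf{Strategy.} Since the analogous result for the Hartree equation is proved in detail in \cite[Theorem 1]{FL2}, the plan is to follow their contraction-mapping scheme and highlight only the estimates needed to incorporate the exchange term. Setting $A := (-\Delta + m^2)^{1/2}$, $U(t) := \E^{-\I t A}$, and $V(Q) := -\kappa\, |\cdot|^{-1} * \rho_Q + \kappa\, R_Q$, equation \eqref{eq:HF2} is equivalent to the Duhamel identity
\begin{equation*}
Q_t \;=\; U(t)\, Q\, U(t)^{*} \;-\; \I \int_{0}^{t} U(t-s)\, \bigl[V(Q_s),\, Q_s\bigr]\, U(t-s)^{*}\, \rd s.
\end{equation*}
Since $U(t)$ commutes with $(1-\Delta)^{s/2}$, the free flow $Q \mapsto U(t) Q U(t)^{*}$ is an isometry of $\cH_s$, and I would run Picard iteration on a closed ball of radius $2\|Q\|_{\cH_s}$ inside the closed subspace of self-adjoint operators in $C([0,T], \cH_s)$; this subspace is preserved by the Duhamel map because $-\I [V(Q), Q]$ is self-adjoint whenever both $Q$ and $V(Q)$ are.

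\textbf{Nonlinear estimates.} The heart of the matter is the Lipschitz bound
\begin{equation*}
\bigl\|\, [V(Q_1), Q_1] - [V(Q_2), Q_2]\, \bigr\|_{\cH_s} \;\leq\; C(R)\, \|Q_1 - Q_2\|_{\cH_s}
\end{equation*}
on balls $\|Q_j\|_{\cH_s} \leq R$, for $s \geq 1/2$. The direct term $|\cdot|^{-1} * \rho_Q$ is handled exactly as in \cite{FL2} via Kato's inequality $|x|^{-1} \leq (\pi/2)\, A$ together with $\|\rho_Q\|_{L^{3/2}} \leq C\|Q\|_{\cH_{1/2}}$. The exchange term is the genuinely new ingredient: decomposing $Q = \sum_j |\psi_j\rangle \langle \psi_j|$ gives the representation
\begin{equation*}
(R_Q f)(x) \;=\; \sum_{j} \psi_j(x)\, \bigl(|\cdot|^{-1} * (\overline{\psi_j}\, f)\bigr)(x),
\end{equation*}
and applying Kato's inequality orbital by orbital yields operator bounds of the form $\| A^{s} R_Q A^{s}\|_{\cL^{1}} \leq C\|Q\|_{\cH_s}^{2}$, with corresponding control of differences. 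I expect this to be the main technical obstacle: the two-variable singular kernel $Q(x,y)/|x-y|$ must be absorbed into the weighted trace norm $\|\cdot\|_{\cH_s}$ while keeping the estimate bilinear in the orbital expansion.

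\textbf{Structure preservation and blow-up alternative.} The trace $\tr Q_t = N$ is conserved because the right-hand side of \eqref{eq:HF2} is a commutator. The projection property is then propagated by noting that $R_t := Q_t^{2} - Q_t$ satisfies the linear equation $\I\partial_t R_t = [H(Q_t), R_t]$ with $H(Q_t) := A + V(Q_t)$ and $R_0 = 0$; uniqueness for this linear equation forces $R_t \equiv 0$, so $Q_t^{2} = Q_t$ and hence $Q_t \geq 0$. Finally, the blow-up alternative is the standard consequence of having a local existence time that depends only on $\|Q_0\|_{\cH_s}$: if $T < \infty$ but $\limsup_{t \to T^{-}} \|Q_t\|_{s} < \infty$, one could restart the local theorem from some $t_0$ close to $T$ and extend the solution beyond $T$, contradicting the maximality of $T$.
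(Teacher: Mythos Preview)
The paper does not prove this theorem at all: it is quoted verbatim from \cite[Theorem~1]{FL2}, with the single sentence that the density-matrix formulation \eqref{eq:HF2} ``follows along the same line'' as the orbital formulation \eqref{eq:HF1}. So there is no proof in the paper to compare your proposal against; your sketch is precisely the kind of argument the authors defer to \cite{FL2}.

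Your outline is sound and standard. Two small corrections. First, \cite[Theorem~1]{FL2} already establishes local well-posedness for the full Hartree--Fock system \eqref{eq:HF1} \emph{including} the exchange term, not just for the Hartree equation; so the exchange estimate is not ``genuinely new'' relative to \cite{FL2}, and the only real adaptation is the passage from orbital $H^{s}$ norms to the trace norm $\|\cdot\|_{\cH_s}$. Second, the bound you quote, $\|A^{s} R_Q A^{s}\|_{\cL^{1}} \leq C\|Q\|_{\cH_s}^{2}$, is not quite the right object: $R_Q$ is linear in $Q$, and what the contraction needs is control of $\|(1-\Delta)^{s/2}\,[R_Q,Q]\,(1-\Delta)^{s/2}\|_{\cL^{1}}$, which is quadratic because of the commutator with $Q$, not because of $R_Q$ alone. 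The underlying mechanism you identify --- Kato's inequality $|x|^{-1}\leq (\pi/2)\sqrt{-\Delta}$ applied orbital by orbital --- is indeed what closes the estimate.
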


\medskip

For sufficiently small values of $N=\tr \; Q$, Fr\"ohlich and Lenzmann also proved
global well-posedness of the Hartree-Fock equation in \cite[Theorem 2]{FL2}. Our goal here is to prove that, for sufficiently large values of $N = \tr \; Q$, blow up in finite time can occur. In particular, we show that the time evolution of an arbitrary spherically symmetric density with negative energy and with finite expectation for the square of the angular momentum operator exhibits blow-up in finite time. To prove this result, we need a few simple preliminary lemmas. First of all, we need to prove that the spherical symmetry is preserved by the time-evolution; this follows easily from the local uniqueness of the solution to (\ref{eq:HF2}).
\begin{lemma}\label{lm:spher}
Let $Q\in \cH^{1/2} (L^2 (\bR^3))$, $Q \geq 0$, and assume that $Q$ is spherically symmetric in the sense that
\[ Q (Rx,Ry) = Q (x,y) \qquad \text{for all } R \in SO(3). \] For $t \in [0,T)$ denote by $Q_t$ the local in time solution to (\ref{eq:HF2}) with $Q_{t=0} = Q$. Then, for every $t \in [0,T)$, we have
\begin{equation}\label{eq:QR}
Q_t (Rx , Ry) = Q_t (x,y) \qquad \text{for all } R \in SO(3) \, .
\end{equation}
\end{lemma}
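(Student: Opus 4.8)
The plan is to exploit the uniqueness half of Theorem~\ref{thm:loc}. For a rotation $R\in SO(3)$, let $U_R$ denote the associated unitary operator on $L^2(\bR^3)$, $(U_R\psi)(x)=\psi(R^{-1}x)$, and for an operator $A$ on $L^2(\bR^3)$ write $A^R := U_R A U_R^*$; in terms of kernels this is exactly $A^R(x,y)=A(R^{-1}x,R^{-1}y)$, so the spherical symmetry hypothesis on $Q$ reads $Q^R=Q$ for every $R$. First I would observe that if $Q_t$ solves \eqref{eq:HF2} on $[0,T)$ with $Q_0=Q$, then $\widetilde Q_t := Q_t^R = U_R Q_t U_R^*$ also solves \eqref{eq:HF2} on the same interval, with the same initial datum $\widetilde Q_0 = Q^R = Q$, and stays in $\cH^{1/2}$ with $\widetilde Q_t\ge 0$ and $\tr\widetilde Q_t = N$ (all of which are unitarily invariant). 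Uniqueness in Theorem~\ref{thm:loc} then forces $\widetilde Q_t = Q_t$, i.e. $Q_t(Rx,Ry)=Q_t(x,y)$, which is \eqref{eq:QR}.

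The one real point to check is the rotational covariance of the right-hand side of \eqref{eq:HF2}, i.e. that $U_R$ conjugated through the nonlinear map reproduces the same equation for $\widetilde Q_t$. This rests on three facts: (i) $(-\Delta+m^2)^{1/2}$ commutes with $U_R$, since $-\Delta$ does; (ii) the direct term transforms correctly, namely $U_R\big((|\cdot|^{-1}*\rho_{Q_t})\big)U_R^* = (|\cdot|^{-1}*\rho_{Q_t^R})$, which follows from the rotation invariance of the Coulomb kernel $|x|^{-1}$ together with $\rho_{Q_t^R}(x)=\rho_{Q_t}(R^{-1}x)$; and (iii) the exchange term transforms correctly, $U_R R_{Q_t} U_R^* = R_{Q_t^R}$, because its kernel is $Q_t(x,y)/|x-y|$ and both $Q_t(R^{-1}x,R^{-1}y)$ and $|R^{-1}x-R^{-1}y|=|x-y|$ behave as required. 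Combining these with $U_R[A,B]U_R^* = [U_R A U_R^*,\,U_R B U_R^*]$ and $i\partial_t \widetilde Q_t = U_R(i\partial_t Q_t)U_R^*$ shows $\widetilde Q_t$ satisfies \eqref{eq:HF2}.

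I do not expect a genuine obstacle here; the lemma is essentially a bookkeeping exercise in unitary conjugation plus an appeal to local uniqueness. The only place demanding a little care is making sure the conjugated solution $\widetilde Q_t$ genuinely lies in the solution class to which the uniqueness statement of Theorem~\ref{thm:loc} applies — i.e. that $t\mapsto \widetilde Q_t$ is in $C([0,T),\cH^{1/2})$, that it is still an orthogonal projection, and that it has the same (finite) trace $N$. All of these are immediate from the fact that $U_R$ is unitary and time-independent, so $\|\widetilde Q_t\|_{\cH^{1/2}}=\|Q_t\|_{\cH^{1/2}}$ and $\widetilde Q_t^2 = U_R Q_t^2 U_R^* = U_R Q_t U_R^* = \widetilde Q_t$. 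Once this is in place, uniqueness closes the argument.
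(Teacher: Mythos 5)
Your proposal is exactly the paper's argument: define the rotated density $\widetilde Q_t(x,y)=Q_t(Rx,Ry)$, verify it also solves \eqref{eq:HF2} with the same initial datum, and invoke the local uniqueness from Theorem~\ref{thm:loc}. You simply spell out the covariance checks (kinetic term, direct term, exchange term, and membership in the solution class) that the paper leaves as ``simple to verify.''
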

\begin{proof}
For arbitrary $R \in SO(3)$, we define $\wt Q_t (x,y) = Q_t (Rx,Ry)$. It is then simple to verify that $\wt Q_t$ is also a solution to (\ref{eq:HF2}), characterized by the same initial data. By the local uniqueness of the solution, we immediately obtain (\ref{eq:QR}).
\end{proof}

The main reason why spherical symmetry is so important to prove the
blow up of solutions of (\ref{eq:HF2}) is Newton's Law, as stated in
the following lemma.
\begin{lemma}[Newton's Law]\label{lm:nl} Suppose that $\rho \in L^1 (\bR^3, \rd x)$ is spherical symmetric with $N = \int \rd y \, \rho (y)$. Then
\begin{equation}\label{eq:nl1}
\int \rd y \, \frac{\rho (y)}{|x-y|} \leq \frac{N}{|x|} \qquad \text{and} \qquad \left| \nabla_x \, \int \rd y \frac{\rho (y)}{|x-y|} \right| \leq \frac{N}{|x|^2}\,
\end{equation}
for a.e. $x \in \bR^3$.
\end{lemma}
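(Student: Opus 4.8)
The plan is to reduce both inequalities to the classical Newton theorem, which states that for a spherically symmetric mass distribution, the gravitational potential outside radius $r$ is that of a point mass concentrated at the origin, and inside a shell of radius $s>|x|$ the potential is constant. Concretely, write $\rho(y) = \rho(|y|)$ and use the well-known spherical average identity
\begin{equation}\label{eq:newtonavg}
\frac{1}{4\pi} \int_{S^2} \frac{\rd\omega}{|x - |y|\,\omega|} = \frac{1}{\max(|x|,|y|)}\,,
\end{equation}
which follows from a direct computation of the angular integral (parametrize by the polar angle relative to $x$ and integrate). Then, passing to spherical coordinates $y = s\omega$ with $s = |y|$,
\begin{equation*}
\int \rd y\, \frac{\rho(y)}{|x-y|} = \int_0^\infty \rd s\, s^2 \rho(s) \int_{S^2} \frac{\rd\omega}{|x - s\omega|} = 4\pi \int_0^\infty \rd s\, \frac{s^2 \rho(s)}{\max(|x|,s)} \leq \frac{1}{|x|} \int_0^\infty 4\pi s^2 \rho(s)\,\rd s = \frac{N}{|x|}\,,
\end{equation*}
since $\max(|x|,s) \geq |x|$. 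This gives the first bound in \eqref{eq:nl1}.

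For the gradient bound, one observes that the potential $\Phi(x) := \int \rd y\, \rho(y)/|x-y|$ is itself spherically symmetric, so $\nabla_x \Phi(x) = \Phi'(|x|)\, x/|x|$ and it suffices to estimate $|\Phi'(r)|$ with $r = |x|$. From the representation $\Phi(r) = 4\pi\big( \frac{1}{r}\int_0^r s^2 \rho(s)\,\rd s + \int_r^\infty s\, \rho(s)\, \rd s \big)$ obtained by splitting $\max(r,s)$ at $s=r$, differentiate in $r$: the boundary terms at $s=r$ cancel, leaving
\begin{equation*}
\Phi'(r) = -\frac{4\pi}{r^2} \int_0^r s^2 \rho(s)\, \rd s\,,
\end{equation*}
which is the statement that the radial field equals $-m(r)/r^2$ with $m(r)$ the enclosed mass. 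Since $0 \leq m(r) = 4\pi\int_0^r s^2\rho(s)\,\rd s \leq N$ (using $\rho \geq 0$, which is part of the intended application even if not written in the hypothesis — if $\rho$ is not assumed nonnegative one replaces $N$ by $\int |\rho|$ on the left, or notes $|m(r)| \le N$ when $N=\int\rho$ together with sign information), we get $|\nabla_x \Phi(x)| = |\Phi'(r)| \leq N/r^2 = N/|x|^2$.

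The only mild technical point is justifying differentiation under the integral sign and the manipulations with potentially singular integrals; this is handled by first proving the identities for, say, $\rho \in C_c^\infty$ (or $\rho$ bounded with compact support), where everything is classical, and then extending to general $\rho \in L^1$ with $\rho \geq 0$ by monotone approximation, noting that both sides of \eqref{eq:newtonavg}-based estimates are finite a.e. and the inequalities are preserved under the limit. I do not anticipate a genuine obstacle here: Newton's theorem is the entire content, and the bounds in \eqref{eq:nl1} are exactly its two standard corollaries (exterior potential dominated by point-mass potential, exterior field dominated by point-mass field).
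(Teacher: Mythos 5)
Your proof is correct and takes essentially the same route as the paper: you derive the shell-theorem representation of the potential (the paper writes it directly as $\int\rd y\,\rho(y)/|x-y| = \frac{1}{|x|}\int_{|y|\le|x|}\rd y\,\rho(y) + \int_{|y|>|x|}\rd y\,\rho(y)/|y|$, which is exactly your $\Phi(r)$ in radial coordinates), differentiate to obtain $\Phi'(r) = -m(r)/r^2$ with $m(r)$ the enclosed mass, and finish with a density argument. Your observation that $\rho\ge 0$ is silently used (so that $m(r)\le N$ and $\max(|x|,s)\ge|x|$ gives the right sign) applies equally to the paper's own proof; it is implicit since the lemma is only ever applied to nonnegative densities $\rho_{Q_t}$.
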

\begin{proof}
The proof relies on the explicit formula for radial functions $$\int \rd y \, \frac{\rho (y)}{|x-y|} = \frac 1{|x|}\int_{|y| \leq |x|}\rd y \, \rho(y) + \int_{|y| > |x|}
\rd y \, \frac{\rho(y)}{|y|}.$$ For $\rho \in L^1 (\bR^3) \cap C^0 (\bR^3)$, explicit differentiation leads to
$$ \left| \nabla_x \, \int \rd y \frac{\rho (y)}{|x-y|} \right| = \frac 1{|x|^2}\int_{|y| \leq |x|}\rd y
\, \rho(y)\leq \frac N{|x|^2}.$$ For general spherical symmetric $\rho \in L^1 (\bR^3)$ the statement follows using a simple density argument.
\end{proof}

An important tool in the proof of the finite time blow-up of solutions of (\ref{eq:HF2}) is the fact that the expectation of the square of the angular momentum operator $L = x \wedge p$ is preserved by the time evolution thanks to  spherically symmetric solutions.
\begin{lemma}\label{lm:L2}
Let $L = x \wedge p$ denote the angular momentum operator. Let $Q\in \cH^{1/2} (L^2 (\bR^3))$, $Q \geq 0$ be a spherical symmetric density with \[ \tr\; L^2 \, Q < \infty \, . \] For $t \in [0,T)$, denote by $Q_t$ the local in time solution of (\ref{eq:HF2}) with $Q_{t=0} = Q$. Then \[ \tr \; L^2 \, Q_t = \tr \; L^2 Q \qquad \text{for all } t \in [0,T) . \]
\end{lemma}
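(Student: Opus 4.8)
The plan is to differentiate $\tr(L^2 Q_t)$ in time and show the derivative vanishes because the mean-field operator generating \eqref{eq:HF2} commutes with $L^2$ on spherically symmetric states. Write
\[
h_{Q_t} := (-\Delta + m^2)^{1/2} - \kappa\Big(\tfrac{1}{|.|} * \rho_{Q_t}\Big) + \kappa R_{Q_t},
\]
so that \eqref{eq:HF2} reads $i\partial_t Q_t = [h_{Q_t}, Q_t]$. \emph{Step 1 (rotation invariance of $h_{Q_t}$).} For $R \in SO(3)$ let $U_R$ be the corresponding unitary on $L^2(\bR^3)$. The kinetic term $(-\Delta+m^2)^{1/2}$ commutes with $U_R$; by Lemma \ref{lm:spher} the density $\rho_{Q_t}(x) = Q_t(x,x)$ is radial, so $|.|^{-1}*\rho_{Q_t}$ is a radial multiplication operator and commutes with $U_R$; and the kernel of $U_R R_{Q_t} U_R^*$ is $R_{Q_t}(R^{-1}x,R^{-1}y) = Q_t(R^{-1}x,R^{-1}y)/|R^{-1}x-R^{-1}y| = Q_t(x,y)/|x-y| = R_{Q_t}(x,y)$, again by Lemma \ref{lm:spher}. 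Hence $U_R h_{Q_t} U_R^* = h_{Q_t}$ for all $R$. Since the components $L_1,L_2,L_3$ of $L = x\wedge p$ generate $\{U_R\}$, it follows by Stone's theorem that $h_{Q_t}$ commutes with each $L_j$, hence with the Casimir $L^2 = \sum_j L_j^2$, in the sense that it commutes with the bounded resolvent $(1+L^2)^{-1}$.

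\emph{Step 2 (conservation of a regularized quantity).} To sidestep trace-class issues with the unbounded operator $L^2$, put $L^2_\delta := L^2(1+\delta L^2)^{-1}$ for $\delta>0$; this is a bounded, non-negative function of $L^2$, so it still commutes with $h_{Q_t}$. Since $Q_t$ is trace class, $\tr(L^2_\delta Q_t)$ is finite, and using $i\partial_t Q_t = [h_{Q_t},Q_t]$ together with cyclicity of the trace (legitimate because $L^2_\delta$ is bounded and $\partial_t Q_t$ is trace class along the flow — it has rank at most $2N$) we get
\[
\frac{d}{dt}\,\tr\big(L^2_\delta Q_t\big) = -i\,\tr\big(L^2_\delta[h_{Q_t},Q_t]\big) = -i\,\tr\big([L^2_\delta,h_{Q_t}]\,Q_t\big) = 0,
\]
so $\tr(L^2_\delta Q_t) = \tr(L^2_\delta Q)$ for every $t\in[0,T)$ and every $\delta>0$.

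\emph{Step 3 (removing the regularization).} As $\delta\downarrow 0$ the scalar function $\lambda\mapsto\lambda/(1+\delta\lambda)$ increases monotonically to $\lambda$ on $[0,\infty)$, so by the spectral theorem $L^2_\delta\uparrow L^2$ in the order of non-negative self-adjoint operators. Since $Q_t\geq 0$, monotone convergence applied to $\sum_k\langle e_k, Q_t^{1/2}L^2_\delta Q_t^{1/2}e_k\rangle$ over an orthonormal basis $\{e_k\}$ gives $\tr(L^2_\delta Q_t)\uparrow\tr(L^2 Q_t)$, and likewise $\tr(L^2_\delta Q)\uparrow\tr(L^2 Q)$. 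Passing to the limit in Step 2 yields $\tr(L^2 Q_t) = \tr(L^2 Q)$; in particular the left-hand side is finite because the right-hand side is, by hypothesis.

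The main obstacle is the regularity input used in Step 2: one must know that $t\mapsto Q_t$ is differentiable in a topology where $\partial_t Q_t = -i[h_{Q_t},Q_t]$ is trace class, so that $\frac{d}{dt}\tr(L^2_\delta Q_t)$ exists and cyclicity applies. Since $Q_t$ has finite rank $N$, this reduces to knowing that $h_{Q_t}$ maps the range of $Q_t$ boundedly into $L^2(\bR^3)$, which is supplied by the local well-posedness theory of \cite{FL2} (Theorem \ref{thm:loc}) — control of $\tr(1-\Delta)^{1/2}Q_t$, plus the Hardy/Kato bound for $(1-\Delta)^{-1/4}|x|^{-1}(1-\Delta)^{-1/4}$ to handle the direct and exchange terms. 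An alternative route avoiding differentiation is to write $Q_t = \mathcal{U}(t)\,Q\,\mathcal{U}(t)^*$ with $\mathcal{U}(t)$ the unitary propagator generated by the time-dependent self-adjoint family $s\mapsto h_{Q_s}$; since every $h_{Q_s}$ commutes with $L^2$, so does $\mathcal{U}(t)$, whence $\tr(L^2 Q_t) = \tr\big(\mathcal{U}(t)^* L^2\mathcal{U}(t)\,Q\big) = \tr(L^2 Q)$ immediately — at the cost of constructing $\mathcal{U}(t)$ and justifying the commutation for the unbounded generator.
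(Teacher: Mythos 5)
Your argument is correct, but it pivots on a different commutation fact than the paper's. You show that the mean-field Hamiltonian $h_{Q_t}$ is rotation invariant, hence $[h_{Q_t},L^2]=0$, and then close the cyclicity computation $\tr\bigl(L^2[h_{Q_t},Q_t]\bigr)=\tr\bigl([L^2,h_{Q_t}]\,Q_t\bigr)=0$. The paper instead observes that the \emph{state} commutes with angular momentum: since $Q_t$ is spherically symmetric (Lemma~\ref{lm:spher}), conjugation by $e^{iL\cdot\alpha}$ fixes $Q_t$, so $[L,Q_t]=0$ and hence $[L^2,Q_t]=0$; the cyclicity step then reads $\tr\bigl(L^2[h_{Q_t},Q_t]\bigr)=\tr\bigl(h_{Q_t}[Q_t,L^2]\bigr)=0$. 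Both facts are consequences of the preserved spherical symmetry and both make the trace vanish, so the two proofs are parallel; but the paper's choice is more economical, since $[Q_t,L^2]=0$ is a single, immediate observation, whereas $[h_{Q_t},L^2]=0$ requires you to check the kinetic, direct and exchange terms separately (each of which in turn invokes the spherical symmetry of $Q_t$). On the other hand, your regularization via $L^2_\delta=L^2(1+\delta L^2)^{-1}$ and the monotone-convergence limit $\delta\downarrow 0$ genuinely addresses the trace-class and cyclicity issues that the paper leaves implicit when manipulating the unbounded operator $L^2$ inside the trace; this is a real improvement in rigor, and it would apply verbatim if you swapped your commutation fact for the paper's.
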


\begin{proof}
Observe that the angular momentum operator $L$ generates rotations in the sense that \[ \left( e^{i \, L \cdot \alpha} \psi \right) (x) = \psi (R_{\alpha} x) \] for all $\psi \in L^2 (\bR^3)$. Here $R_{\alpha} \in SO(3)$ denotes the rotation around the axis $\hat \alpha$, with angle $|\alpha|$. This implies that, for an arbitrary spherical symmetric density $Q$, we have
\[ \left( e^{iL\cdot \alpha} Q e^{-iL \cdot \alpha} \right) (x,y) = Q (R_{\alpha} x , R_{\alpha} y) = Q (x,y) \]
for all $\alpha$. Differentiating with respect to $\alpha$, we obtain that $[ L , Q ] = 0$ and thus that $[L^2, Q]= 0$ for any spherically symmetric density $Q$. Now, since the time evolution $Q_t$ of the spherical symmetric initial data $Q$ is spherical symmetric (by Lemma~\ref{lm:spher}), it follows that
\begin{equation*}
\begin{split}
\frac{\rd}{\rd t} \, \tr \; L^2 Q_t  = \; & \tr \; L^2 \left[ \sqrt{p^2 +m^2} -\kappa \left( \frac{1}{|.|} * \rho_{Q_t} \right) + \kappa R_{Q_t} , Q_t \right] \\ = \; & \tr \; \left( \sqrt{p^2 +m^2} -\kappa \left( \frac{1}{|.|} * \rho_{Q_t} \right) + \kappa R_{Q_t}\right)  \left[ Q_t,  L^2 \right]  =  0 \, .
\end{split}
\end{equation*}
\end{proof}

\medskip

We are now ready to state and prove our main theorem.
\begin{theorem}(Blow-up for Hartree-Fock)
\label{thm:main}
Let $Q \in \cH^{1/2} (L^2 (\bR^3))$ be a spherical symmetric orthogonal projection with $N(Q) = \tr Q < \infty$, with negative energy $\cE_{\text{HF}} (Q) < 0$, with finite expectation of the square of the angular momentum operator $\cL^2 (Q) = \tr \, L^2 Q < \infty$, such that
\begin{equation}
\label{eq:tech}
\tr \; x^4 \, Q < \infty, \qquad \text{and } \quad \tr \; (-\Delta) \,Q < \infty \, .
\end{equation}
For $t \in [0,T)$, let $Q_t$ denote the maximal local in time solution to (\ref{eq:HF2}) with $Q_{t=0} = Q$. Then $T < \infty$ and \[ \| Q_t \|^2_{H^{1/2}} = \tr \; (1-\Delta)^{1/2} Q_t \to \infty \qquad \text{as } t \to T^- \, .\]
\end{theorem}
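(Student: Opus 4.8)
\emph{Strategy.} We follow the virial argument of Fr\"ohlich and Lenzmann \cite{FL1,FL2}, with the exchange term controlled through the conserved quantity of Lemma~\ref{lm:L2}. Write $A=\sqrt{-\Delta+m^2}$, $H_{Q_t}=A+W_{Q_t}$ with $W_{Q_t}=-\kappa\bigl(\tfrac{1}{|\cdot|}*\rho_{Q_t}\bigr)+\kappa R_{Q_t}$, and set
\[
M=x\,A\,x\ \geq\ 0,\qquad \mathcal M(t)=\tr\,M\,Q_t,\qquad D=x\cdot p+p\cdot x
\]
(with $p=-\I\nabla$). The goal is a bound $\mathcal M(t)\leq \cE_{\mathrm{HF}}(Q)\,t^2+C(1+t)$ on $[0,T)$. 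Since $M\geq 0$ and $Q_t\geq 0$ we have $\mathcal M(t)\geq 0$, so this bound is incompatible with $T=\infty$ whenever $\cE_{\mathrm{HF}}(Q)<0$; hence $T<\infty$, and the blow-up alternative of Theorem~\ref{thm:loc} then forces $\tr(1-\Delta)^{1/2}Q_t\to\infty$ as $t\to T^-$. There is a preliminary technical matter: one needs $\mathcal M(0)<\infty$ and that $t\mapsto\mathcal M(t)$ is $C^2$ on $[0,T)$, together with finiteness of the moments $\tr x^4 Q_t$, $\tr(-\Delta)Q_t$, $\tr L^2 Q_t$, $\tr D\,Q_t$. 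Finiteness at $t=0$ follows from \eqref{eq:tech} using the operator identity $xAx=\tfrac12(x^2A+Ax^2)+A^{-1}+\tfrac12 m^2A^{-3}$ (last two terms bounded and positive, $m>0$) and the Cauchy--Schwarz bound $|\tr x^2AQ|\leq(\tr x^4Q)^{1/2}\bigl(\tr(-\Delta)Q+m^2N\bigr)^{1/2}$; propagation in time and the $C^2$ property are obtained by differentiating a regularized observable (replace $x$ by $x\,\chi(x/R)$ and $A$ by a bounded function of $p$) along \eqref{eq:HF2} and removing the cut-offs using Theorem~\ref{thm:loc}.

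\emph{The virial identity.} From $[M,A]=\I D$ one gets $\dot{\mathcal M}(t)=\tr D\,Q_t+F(t)$, with the correction
\[
F(t)=-\I\,\tr[M,W_{Q_t}]\,Q_t=\I\kappa\,\tr\bigl[M,\tfrac{1}{|\cdot|}*\rho_{Q_t}\bigr]Q_t-\I\kappa\,\tr[M,R_{Q_t}]\,Q_t .
\]
For the dilation quantity we use the virial/scaling identity: the Hartree--Fock flow is the Hamiltonian flow of $\cE_{\mathrm{HF}}$, and both $|x|^{-1}$ and the exchange kernel $|x-y|^{-1}$ are homogeneous of degree $-1$, so with $U_\lambda$ the $L^2$-unitary dilation $(U_\lambda f)(x)=\lambda^{3/2}f(\lambda x)$,
\[
\frac{d}{dt}\tr D\,Q_t\ =\ 2\,\frac{d}{d\lambda}\Big|_{\lambda=1}\cE_{\mathrm{HF}}(U_\lambda Q_t U_\lambda^*)\ =\ 2\,\cE_{\mathrm{HF}}(Q)-2m^2\,\tr A^{-1}Q_t\ \leq\ 2\,\cE_{\mathrm{HF}}(Q).
\]
The crucial feature is that this identity is \emph{exact}: the direct and exchange potentials each contribute precisely the virial amount dictated by their scaling, and the mass term only improves the inequality. (The same identity may be checked by commutator algebra, using $[D,A]=2\I(A-m^2A^{-1})$ together with $2\int\rho\,\bigl(x\cdot\nabla(\tfrac1{|\cdot|}*\rho)\bigr)=-\iint|x-y|^{-1}\rho(x)\rho(y)\,dx\,dy$ and $\frac{d}{d\lambda}\big|_{\lambda=1}\iint|x-y|^{-1}|Q_t(\lambda x,\lambda y)|^2\,dx\,dy=-5\iint|x-y|^{-1}|Q_t(x,y)|^2\,dx\,dy$.)

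\emph{The correction term --- the main obstacle.} It remains to bound $F(t)$ by a constant $C$ depending only on $N$, $\cE_{\mathrm{HF}}(Q)$, $\tr L^2 Q$, $m$, $\kappa$, and \emph{not} on $t$. For a multiplication operator $V$ one has $[M,V]=\sum_j x_j[A,V]x_j$, so the direct part of $F(t)$ is a trace of a commutator involving $\nabla(\tfrac1{|\cdot|}*\rho_{Q_t})$ against the position-weighted density $\sum_j x_j Q_t x_j$ (kernel $(x\cdot y)Q_t(x,y)$); since $\rho_{Q_t}$ is spherically symmetric (Lemma~\ref{lm:spher}), Newton's Law (Lemma~\ref{lm:nl}) gives $\bigl|\nabla(\tfrac1{|\cdot|}*\rho_{Q_t})\bigr|\leq N/|x|^2$, whose $|x|^{-2}$ cancels the $|x|^2$-weight and leaves a quantity bounded by $C(N,\kappa)$, exactly as in \cite{FL1,FL2}. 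The exchange part of $F(t)$, coming from $[M,R_{Q_t}]$, is the genuinely new contribution. Expanding this commutator into kernels, it reduces to integrals of the schematic form $\iint|x-y|^{-1}P(x,y)\,(\partial Q_t)(x,y)\,\overline{Q_t(x,y)}\,dx\,dy$ with $P$ quadratic in $(x,y)$; the part of $\partial Q_t$ along the dilation field recombines (again by the homogeneity of $|x-y|^{-1}$), but the part \emph{transverse} to it --- the angular derivatives $L_x Q_t$, $L_y Q_t$ generated by $L=x\wedge p$ --- is \emph{not} absorbed by Newton's Law. Here one uses that $Q_t$ is spherically symmetric, so by Lemma~\ref{lm:L2} the quantity $\tr L^2 Q_t=\tr L^2 Q$ is finite and conserved; a Cauchy--Schwarz estimate then bounds this angular contribution by $(\tr L^2 Q)^{1/2}$ times $N$ and $\kappa$. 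I expect this to be the main difficulty of the proof: isolating from $[M,R_{Q_t}]$ precisely the combination controlled by the conserved $\tr L^2 Q_t$ rather than by the uncontrolled full kernel gradient $\nabla_x Q_t(x,y)$, uniformly in $t$ and down to the singularity of $|x-y|^{-1}$.

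\emph{Conclusion.} Given the exact identity and $|F(t)|\leq C$ on $[0,T)$, integrate once: $\tr D\,Q_t\leq\tr D\,Q+2\cE_{\mathrm{HF}}(Q)\,t$, whence
\[
\mathcal M(t)=\mathcal M(0)+\int_0^t\bigl(\tr D\,Q_s+F(s)\bigr)\,ds\ \leq\ \mathcal M(0)+(\tr D\,Q)\,t+\cE_{\mathrm{HF}}(Q)\,t^2+C\,t\ =\ \cE_{\mathrm{HF}}(Q)\,t^2+O(t).
\]
Since $\cE_{\mathrm{HF}}(Q)<0$, the right-hand side is negative for $t$ large, contradicting $\mathcal M(t)\geq 0$. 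Therefore $T<\infty$, and Theorem~\ref{thm:loc} yields $\|Q_t\|^2_{H^{1/2}}=\tr(1-\Delta)^{1/2}Q_t\to\infty$ as $t\to T^-$.
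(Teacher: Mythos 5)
Your overall strategy coincides with the paper's: work with $\mathcal M(t)=\tr\,M\,Q_t$ for $M=x\sqrt{p^2+m^2}\,x$, show $\ddot{\mathcal M}\le 2\cE_{\mathrm{HF}}(Q)$ modulo an error $F(t)$ bounded uniformly in $t$, and conclude by contradiction with $M\ge 0$. Your derivation of the second-derivative bound via the Hamiltonian/scaling identity $\frac{d}{dt}\tr D\,Q_t=2\frac{d}{d\lambda}\big|_{\lambda=1}\cE_{\mathrm{HF}}(U_\lambda Q_tU_\lambda^*)=2\cE_{\mathrm{HF}}(Q)-2m^2\tr A^{-1}Q_t$ is a clean conceptual replacement for the paper's explicit commutator computation in Step~2, and it is correct; the two routes give the same inequality. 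Your treatment of the direct part of $F(t)$ via Newton's Law is also the right idea and matches the paper (though the paper first splits $M$ into $\frac12(Ax^2+x^2A)$ plus a $p/A$ correction so that the Calder\'on--Zygmund commutator bound of Lemma~\ref{lm:stein} can be invoked with $\|x V_{Q_t}\|_\infty$ and $\|x^2\nabla V_{Q_t}\|_\infty$; your ``$[M,V]=\sum_j x_j[A,V]x_j$, the $|x|^{-2}$ cancels the weight'' is a heuristic that does not directly yield an operator-norm bound on the trace without that split).

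The genuine gap is exactly where you flag it, and it is not a small one: you do not actually estimate $-\I\kappa\tr[M,R_{Q_t}]Q_t$. You write that the transverse (angular) part of the kernel gradient ``is not absorbed by Newton's Law'' and that it ``should be'' controlled by the conserved $\tr L^2 Q_t$ via Cauchy--Schwarz, and then say you expect this to be the main difficulty. That expectation is right, but it is precisely the content of the theorem: the paper's Step~1, eqs.~(\ref{eq:pf3})--(\ref{eq:pf9}), carries out this isolation. Concretely, after reducing to the orbital sums one must bound $\sup_x |x|^2\,\bigl|\int \rd y\,\tfrac{x-y}{|x-y|^3}\psi_{j,t}(y)\overline{\psi}_{i,t}(y)\bigr|$ summed over $i,j$. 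The paper rotates so that $x=R(re_3)$, splits the vector $\tfrac{x-y}{|x-y|^3}$ into its component along $e_3$ and the two orthogonal components, bounds the longitudinal piece by $CN$ using Lemma~\ref{lm:intbd} (the uniform-in-$\lambda$ bound on $\int_{S^2}|1-\lambda\hat y_3|\,|e_3-\lambda\hat y|^{-3}\rd\hat y$, which is a nontrivial estimate in its own right), and rewrites the transverse pieces as $y_3\partial_{y_1}-y_1\partial_{y_3}$ acting on $|re_3-y|^{-1}$, i.e.\ as the angular momentum component $L_2$ applied to the orbitals. Only after this exact algebraic reorganization does the bound $\sum_j|L_2\varphi_{j,t}(y)|^2=(L_2Q_tL_2)(y,y)$, its rotational symmetrization to $\sum_\ell(L_\ell Q_tL_\ell)(y,y)$, and a final application of Newton's Law to this spherically symmetric density produce a constant proportional to $\tr L^2Q_t=\tr L^2Q$. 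None of this is present in your proposal; without it, the claim $|F(t)|\le C$ is unjustified, and the proof does not close. Your preliminary remarks on finiteness and propagation of the moments are reasonable as a sketch (the paper itself is terse on this), but that is a minor point compared to the missing exchange estimate.
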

{\it Remarks.}
\begin{itemize}
\item[$\bullet$] According to \cite[Theorem 4]{FL2} our main Theorem \ref{thm:main}  implies that when approaching the time of blow-up  any blow-up solution  exhibits a concentration of particles at the origin, with
    \[ \lim\inf_{t\to T^-} \int_{|x| \leq R} \rd x\, \rho_{Q_t} (x) > 0 \qquad \text{ for any $R > 0$}. \]
\item[$\bullet$] The condition $\cE_{\text{HF}} (Q) <0$ requires $N = \tr Q$ to be sufficiently large.
\end{itemize}

\begin{proof}
Throughout the proof, we will use the notation $p=-i \nabla_x$. Consider the observable $M = x \sqrt{p^2 + m^2} x = \sum_{j=1}^3 x_j \sqrt{p^2 +m^2} x_j$. We are interested in the time evolution of the expectation $\tr (M Q_t)$.

\medskip

\noindent {\it Step 1.} There exists a constant $C$, only depending on $N (Q)$ and $\cL^2 (Q)$, such that
\begin{equation}\label{eq:step1}
\frac{\rd}{\rd t} \, \tr M Q_t \leq \tr \; (p \cdot x + x \cdot p) \, Q_t + C \, .
\end{equation}

To prove (\ref{eq:step1}) we start by computing
\begin{equation}\label{eq:pf1}
\begin{split}
\frac{\rd}{\rd t} \, \tr \; M Q_t = \; & -i \tr \; M \left[ \sqrt{p^2 + m^2} - \kappa \left( \frac{1}{|.|} * \rho_{Q_t} \right) + \kappa R_{Q_t} , Q_t \right]  \\ = \; & -i \tr
\, \left[ x \sqrt{p^2 + m^2} \, x , \sqrt{p^2 +m^2} \, \right] Q_t \\ &+ i \kappa \tr \, \left[ x \sqrt{p^2 + m^2} \, x, \left( \frac{1}{|.|} * \rho_{Q_t} \right) \right] Q_t \\ & -i \kappa \tr \, \left[ x \sqrt{p^2 + m^2} \, x, R_{Q_t} \right] Q_t \,.
\end{split}
\end{equation}
Using that $x = i \nabla_p$, the first term on the r.h.s. of the last equation is given by
\[ -i \tr \, \left[ x \sqrt{p^2 + m^2} x , \sqrt{p^2 + m^2} \, \right] Q_t = \tr \left( p \cdot x + x \cdot p \right) Q_t \, . \] To control the second term on the r.h.s. of (\ref{eq:pf1}), let $V_{Q_t} (x) = (|.|^{-1} * \rho_{Q_t}) (x)$. Then
\begin{equation*}
\begin{split}
i \kappa \tr \, \Big[ x &\sqrt{p^2 + m^2} \, x, V_{Q_t} \Big] Q_t \\ = \; & i \kappa \tr \, \left( x \sqrt{p^2 + m^2} \, x V_{Q_t} (x) - V_{Q_t} (x) x \sqrt{p^2 + m^2} \, x \right) Q_t \\  = \; &i\kappa \tr \, \left[ \sqrt{p^2 + m^2} , x^2 V_{Q_t} (x) \right] Q_t \\ &- \kappa \tr \, \left( \frac{p}{\sqrt{p^2 + m^2}} \cdot x V_{Q_t} (x) + V_{Q_t} (x) x \cdot \frac{p}{\sqrt{p^2 + m^2}} \right) Q_t \,.
\end{split}
\end{equation*}
Therefore, using Lemma \ref{lm:stein}, we find
\begin{equation}\label{eq:pf2}
\begin{split}
\left| \, i \kappa \tr \, \left[ x \sqrt{p^2 + m^2} \, x, V_{Q_t} \right] Q_t  \right| \leq \; & \kappa N \left\| \left[ \sqrt{p^2 + m^2} , x^2 V_{Q_t} (x) \right] \right\| + \kappa N \| x V_{Q_t} (x) \| \\ \leq \; & \kappa N \left(  \| x V_{Q_t} (x) \| + \| x^2 \nabla V_{Q_t} \| \right)\,.
\end{split}
\end{equation}
By Newton's law (Lemma \ref{lm:nl}), we have
\begin{equation}
\| x V_{Q_t} \| = \sup_{x \in \bR^3} |x| \int \rd y \, \frac{\rho_{Q_t} (y)}{|x-y|} \leq C N
\end{equation}
and
\begin{equation}
\| x^2 \nabla V_{Q_t} \| = \sup_{x \in \bR^3} |x|^2 \left| \nabla_x \int \rd y \, \frac{ \rho_{Q_t} (y)}{|x-y|} \right| \leq C N \, .
\end{equation}
{F}rom (\ref{eq:pf2}), it follows that the second term on the r.h.s. of (\ref{eq:pf1}) is bounded, in absolute value, by
\begin{equation}
\left|\, i \kappa \tr \left[ x \sqrt{p^2 + m^2} \, x, V_{Q_t} \right] Q_t  \right| \leq C\, \kappa N^2 \, .
\end{equation}

Finally, we consider the third term on the r.h.s. of (\ref{eq:pf1}). To this end, we decompose the density $Q_t$ in a sum over orthogonal projections \begin{equation}\label{eq:dec} Q_t = \sum_{j=1}^N |\psi_{j,t} \rangle \langle \psi_{j,t}| \qquad \text{with } \quad \langle \psi_{j,t} , \psi_{i,t} \rangle = \delta_{ij}. \end{equation} It is easy to see that the wave functions $\{ \psi_{j,t} \}_{j=1}^N$ are actually the solution of the Hartree-Fock system (\ref{eq:HF1}) with initial data $\{ \psi_j \}_{j=1}^N$ chosen so that, at time $t=0$, $\langle \psi_i, \psi_j \rangle =\delta_{ij}$ and $Q = \sum_{j=1}^N |\psi_j \rangle \langle \psi_j|$ (it is then easy to show that the r.h.s. of (\ref{eq:dec}) is a solution to (\ref{eq:HF2}); from the local uniqueness of the solution to (\ref{eq:HF2}), we obtain (\ref{eq:dec})). Using this decomposition of the density $Q_t$, we obtain
\begin{equation}\label{eq:pf3}
\begin{split}
\tr \, \Big[x &\sqrt{p^2 + m^2} x, R_{Q_t} \Big] Q_t \\  = \; &\sum_{j=1}^N \langle \psi_{j,t} , \left( \sqrt{p^2 + m^2} x^2 R_{Q_t} - R_{Q_t} x^2 \sqrt{p^2 + m^2} \right) \psi_{j,t} \rangle \\ &+ i \sum_{j=1}^N \langle \psi_{j,t} , \left( \frac{p}{\sqrt{p^2 + m^2}} \cdot x R_{Q_t} + R_{Q_t} x \cdot \frac{p}{\sqrt{p^2 + m^2}} \right) \psi_{j,t} \rangle \\
= \; &\sum_{i,j=1}^N \left\langle \psi_{j,t} , \left[ \sqrt{p^2+m^2} , x^2 \left(\frac{1}{|.|} * \psi_{j,t} \overline{\psi}_{i,t} \right) \right] \psi_{i,t} \right\rangle \\ &+ i \text{Re} \, \sum_{i,j=1}^N \left\langle \psi_{j,t} ,  \frac{p}{\sqrt{p^2 + m^2}} \cdot x \left(\frac{1}{|.|} * \psi_{j,t} \overline{\psi}_{i,t} \right) \psi_{i,t} \right\rangle \, .
\end{split}
\end{equation}
To bound the last term on the r.h.s. of the last equation, we observe that
\begin{equation*}
\begin{split}
\Big| \sum_{i,j=1}^N &\left\langle \psi_{j,t} ,  \frac{p}{\sqrt{p^2 + m^2}} \cdot x \left(\frac{1}{|.|} * \psi_{j,t} \overline{\psi}_{i,t} \right) \psi_{i,t} \right\rangle \Big|  \\ \leq \; & \sum_{i,j=1}^N  \int \rd x \rd y \, \left| \frac{p}{\sqrt{p^2 + m^2}}
\, \psi_{j,t} (x)\right| \,  |\psi_{i,t} (x)| \, |x| \, \frac{|\psi_{j,t} (y)| \, |\psi_{i,t} (y)|}{|x-y|} \\ \leq \; &\int \rd x \rd y \, \rho_{Q_t} (x)\, |x| \, \frac{\rho_{Q_t} (y)}{|x-y|} + \sum_{j}^N  \int \rd x \rd y \, \left| \frac{p}{\sqrt{p^2 + m^2}} \, \psi_{j,t} (x) \right|^2 \, |x| \, \frac{\rho_{Q_t} (y)}{|x-y|}\\ \leq \; &C N^2
\end{split}
\end{equation*}
where we used Newton's law (Lemma \ref{lm:nl}) to perform the $y$-integration, and then, in the second term, we estimated $\tr (p/\sqrt{p^2 +m^2}) Q_t (p/\sqrt{p^2 +m^2}) \leq \tr Q_t = N$.
As for the first term on the r.h.s. of (\ref{eq:pf3}), we can bound its absolute value using Lemma \ref{lm:stein}. We find
\begin{equation}\label{eq:pf4}
\begin{split}
\Big| \sum_{i,j=1}^N \Big\langle \psi_{j,t} , &\left[ \sqrt{p^2+m^2} , x^2 \left(\frac{1}{|.|} * \psi_{j,t} \overline{\psi}_{i,t} \right) \right] \psi_{i,t} \Big\rangle \Big| \\ \leq
\; & \sum_{i,j=1}^N \,
\left\| \left[ \sqrt{p^2+m^2} , x^2 \left(\frac{1}{|.|} * \psi_{j,t} \overline{\psi}_{i,t} \right) \right] \right\| \\ \leq \; &\sum_{i,j=1}^N \left( \left\| x \left( \frac{1}{|.|} * \psi_{j,t} \overline{\psi}_{i,t}\right)\right\| + \left\| x^2 \nabla \left(\frac{1}{|.|} * \psi_{j,t} \overline{\psi}_{i,t} \right) \right\| \right)\,.
\end{split}
\end{equation}
To bound the first contribution, we observe that, by Lemma \ref{lm:nl},
\begin{equation}\label{eq:pf5}
\begin{split}
\left\| x \left( \frac{1}{|.|} * \psi_{j,t} \overline{\psi}_{i,t}\right)\right\| = \; &\sup_{x} |x| \left| \int \rd y \, \frac{1}{|x-y|} \psi_{j,t} (y) \overline{\psi}_{i,t} (y) \right| \\
\leq \; & \sup_x |x| \int \rd y \, \frac{1}{|x-y|} \, \frac{|\psi_{j,t} (y)|^2 + |\psi_{i,t} (y)|^2}{2} \\ \leq \; & \sup_x |x| \int \rd y \, \frac{\rho_{Q_t} (y)}{|x-y|} \leq C N \, .
\end{split}
\end{equation}
Next, we consider the second term on the r.h.s. of (\ref{eq:pf4}). For $x \in \bR^3$, we can find a rotation $R \in SO (3)$ such that $x=R(r e_3)$ where $r = |x|$ and $e_3 = (0,0,1)$. Therefore
\begin{equation}\label{eq:pf6}
\begin{split}
|x|^2 \left| \int \rd y \, \frac{(x-y)}{|x-y|^3} \, \psi_{j,t} (y) \overline{\psi}_{i,t} (y) \right| = \; & \left| \, r^2 \int \rd y \frac{(R(r e_3) - y)}{|R(re_3) - y|^3} \, \psi_{j,t} (y) \, \overline{\psi}_{i,t} (y) \right| \\
\leq \; & \sum_{i=1}^2 \left| \, r^2 \int \rd y \frac{y_i}{|re_3 - y|^3} \, \psi_{j,t} (Ry) \, \overline{\psi}_{i,t} (Ry) \right| \\ &+ \left| \, r^2 \int \rd y \frac{r-y_3}{|re_3 - y|^3} \, \psi_{j,t} (Ry) \, \overline{\psi}_{i,t} (Ry) \right|\,.
\end{split}
\end{equation}
The last term on the r.h.s. of (\ref{eq:pf6}) can be estimated by
\begin{equation}
\begin{split}
\left| \, r^2 \int \rd y \frac{r-y_3}{|re_3 - y|^3} \, \psi_{j,t} (Ry) \, \overline{\psi}_{i,t} (Ry) \right| \leq \; &  r^2 \int \rd y \frac{|r-y_3|}{|re_3 - y|^3} \, |\psi_{j,t} (Ry)| \, |\psi_{i,t} (Ry)| \\ \leq \; &  r^2 \int \rd y \frac{|r-y_3|}{|re_3 - y|^3} \, \rho_t (Ry)  \\
\leq \; &  r^2 \int \rd y \frac{|r-y_3|}{|r - y|^3} \, \rho_t (y)
\end{split}
\end{equation}
Introducing spherical coordinates for $y = s \hat {y}$ for $\hat{y} \in S^2$, we find
\begin{equation}\label{eq:pf7}
\begin{split}
\left| \, r^2 \int \rd y \frac{r-y_3}{|re_3 - y|^3} \, \psi_{j,t} (sy) \, \overline{\psi}_{i,t} (s y) \right| \leq \; & \int_0^{\infty}
\rd s s^2 \, \rho_t (s) \int_{S^2} \rd \hat{y} \, \frac{|1 - \frac{s \hat{y}_3}{r}|}{|e_3 - \frac{s\hat{y}}{r}|} \leq C N \, .
\end{split}
\end{equation}
Here we used the fact that, as we prove in Lemma \ref{lm:intbd} below,
\begin{equation}\label{eq:int} \sup_{\lambda >0} \int_{S^2} \rd \hat{y} \frac{|1 - \lambda \hat{y}_3|}{|e_3 - \lambda \hat{y}|} < \infty \,. \end{equation}
As for the first term on the r.h.s. of (\ref{eq:pf6}), we remark that, for example, the summand with $i=1$ can be controlled as follows.
\begin{equation}\label{eq:pf8}
\begin{split}
\left| \, r^2 \int \rd y \frac{y_1}{|re_3 - y|^3} \psi_{j,t} (Ry)
\overline{\psi}_{i,t} (Ry) \right| \leq \;&
\left| \, r \int \rd y  \, \frac{(r-y_3)y_1}{|re_3 - y|^3} \psi_{j,t} (Ry)
\overline{\psi}_{i,t} (Ry) \right|  \\ &+
\left| \, r \int \rd y \, y_3 \frac{y_1}{|re_3 - y|^3} \psi_{j,t} (Ry)
\overline{\psi}_{i,t} (Ry) \right| \, .
\end{split}
\end{equation}
Observing that
\begin{equation*}
\begin{split}
\Big| \, r \int \rd y \, y_3 \, \frac{y_1}{|re_3 - y|^3} \, &\psi_{j,t} (Ry) \,
\overline{\psi}_{i,t} (Ry) \Big| \\ = \; & \left|\, r \int \rd y \, y_3 \, \partial_{y_1} \frac{1}{|re_3 - y|} \, \psi_{j,t} (Ry) \, \overline{\psi}_{i,t} (Ry) \right| \\
\leq \; &  \left| \, r \int \rd y \, \left( y_3 \partial_{y_1} - y_1 \partial_{y_3} \right) \frac{1}{|re_3 - y|} \psi_{j,t} (Ry) \overline{\psi}_{i,t} (Ry) \right| \\ &+
r \int \rd y \, |y_1| \frac{|r-y_3|}{|re_3 - y|^3} |\psi_{j,t} (Ry)|  |\psi_{i,t} (Ry)|
\end{split}
\end{equation*}
it follows from (\ref{eq:pf8}) that (using the notation $\ph_{j,t} (y) = \psi_{j,t} (Ry)$)
\begin{equation}\label{eq:pf8b}
\begin{split}
\Big| \, r^2 &\int \rd y \frac{y_1}{|re_3 - y|^3} \psi_{j,t} (Ry)
\overline{\psi}_{i,t} (Ry) \Big| \\ \leq \; &\left| \, r \int \rd y \, \frac{1}{|re_3 - y|} \left( L_2 \, \ph_{j,t} \right)(y) \overline{\ph}_{i,t} (y) \right| +\left| \, r \int \rd y \, \frac{1}{|re_3 - y|} \left( L_2 \overline{\ph}_{i,t} \right) (y) \ph_{j,t} (y) \right| \\ &+ r \int \rd y \,  \frac{1}{|re_3 - y|^3} |\psi_{j,t} (Ry)|  |\psi_{i,t} (Ry)|
\\ \leq \; & r \int \rd y \, \frac{1}{|re_3 - y|} \left( |L_2 \ph_{j,t}(y)| |\ph_{i,t} (y)| + | L_2 \ph_{i,t} (y)| |\ph_{j,t} (y)| \right) + r \int \rd y \,  \frac{\rho_t (y)}{|re_3 - y|^3}.
\end{split}
\end{equation}
The last term is bounded by $C N$ by Lemma \ref{lm:nl}. The first term on the r.h.s. of the last equation, can be controlled by
\begin{equation}\label{eq:pf9}
\begin{split}
r \int \rd y \, \frac{1}{|re_3 - y|} \, &\left(
|L_2 \, \ph_{j,t}(y)| |\ph_{i,t} (y)| + | L_2 \ph_{i,t} (y)| |\ph_{j,t} (y)| \right) \\ &\leq  r \sum_{j=1}^N \int \rd y \, \frac{1}{|re_3 -y|} \left( |L_2 \, \ph_{j,t}(y)|^2 + |\ph_{j,t} (y)|^2 \right).
\end{split}
\end{equation}
Note that $\sum_{j=1}^N  |\ph_{j,t} (y)|^2 = \rho_{Q_t} (Ry) = \rho_{Q_t} (y)$. Moreover, we have
\[ \sum_{j=1}^N |L_2 \ph_{j,t} (y)|^2 = \sum_{j=1}^N \left(L_2 |\ph_{j,t} \rangle \langle \ph_{j,t}| L_2 \right) (y,y) = (L_2 \widetilde Q_t L_2)(y,y) \] where we defined $\widetilde Q_t = \sum_j |\ph_{j,t} \rangle \langle \ph_{j,t}|$. Since
\[ \widetilde Q_t (x,y) = \sum_j \ph_{j,t} (x) \overline{\ph}_{j,t} (y) = \sum_j \psi_{j,t} (Rx) \overline{\psi}_{j,t} (Ry) = Q_t (Rx,Ry) = Q_t (x,y) \] for all $x,y \in \bR^3$, it follows that $\widetilde Q_t = Q_t$. Therefore, from (\ref{eq:pf8b}) and (\ref{eq:pf9}) we conclude that
\begin{equation*}
\begin{split}
\left| \, r^2 \int \rd y \frac{y_1}{|re_3 - y|^3} \psi_{j,t} (Ry)
\overline{\psi}_{i,t} (Ry) \right| \leq  \; & C N +  r \int \rd y \, \frac{1}{|re_3 -y|}  (L_2 Q_t L_2 ) (y,y)  \\ \leq \; & C N + \sum_{\ell=1}^3 r \int \rd y \, \frac{1}{|re_3 -y|}  (L_{\ell} Q_t L_{\ell} ) (y,y).
\end{split}
\end{equation*}
Since $\sum_{\ell=1}^3 (L_{\ell} Q_t L_{\ell}) (y,y)$ is invariant w.r.t. rotations of $y$, we can apply Lemma \ref{lm:nl} to obtain that
\begin{equation*}
\begin{split}
\left| \, r^2 \int \rd y \frac{y_1}{|re_3 - y|^3} \psi_{j,t} (Ry)
\overline{\psi}_{i,t} (Ry) \right| \leq \; &C N +
 C \sum_{\ell=1}^3 \int \rd y \,   (L_{\ell} Q_t L_{\ell} ) (y,y) \\ = \; & C N + C \tr \; L^2 Q_t \, .
\end{split}
\end{equation*}
Using Lemma \ref{lm:L2} and inserting the last bound and (\ref{eq:pf7})
back in (\ref{eq:pf6}) we find that
\begin{equation*}
|x|^2 \Big| \int \rd y \frac{(x-y)}{|x-y|^3} \psi_{j,t} (y) \overline{\psi}_{i,t} (y) \Big| \leq C N + C \tr L^2 Q
\end{equation*}
and thus, from (\ref{eq:pf4}), that
\begin{equation*}
\begin{split}
\Big| \sum_{i,j=1}^N &\left\langle \psi_{j,t} , \left[ \sqrt{p^2+m^2} , x^2 \left(\frac{1}{|.|} * \psi_{j,t} \overline{\psi}_{i,t} \right) \right] \psi_{i,t} \right\rangle \Big| \leq C N^3 (Q) + C N^2 (Q) \cL^2 (Q)
\end{split}
\end{equation*}
{F}rom (\ref{eq:pf1}), we obtain (\ref{eq:step1}).

\medskip

\noindent {\it Step 2.} If $Q_t$ is a solution of the Hartree-Fock equation (\ref{eq:HF2}),
we have
\begin{equation}
\label{eq:step2}
\frac{\rd}{\rd t} \, \tr \; (p\cdot x + x \cdot p) \, Q_t \leq 2 \, \cE_{\text{HF}} (Q) .
\end{equation}

To show (\ref{eq:step2}), we compute
\begin{equation}\label{eq:td}
\begin{split}
\frac{\rd}{\rd t} \, \tr \; &(p \cdot x + x \cdot p ) \, Q_t \\ = &\; -i \tr \; (p\cdot x + x \cdot p) \left[ \sqrt{p^2 + m^2} - \kappa \left( \frac{1}{|.|} * \rho_{Q_t} \right) + \kappa R_{Q_t} , Q_t \right] \\ = & \; -i \tr \left[ (p\cdot x + x \cdot p), \sqrt{p^2 + m^2} - \kappa \left( \frac{1}{|.|} * \rho_{Q_t} \right) + \kappa R_{Q_t} \right] \, Q_t \,.
\end{split}
\end{equation}
Now we observe that
\begin{equation}\label{eq:td1}
-i \tr \; \left[ (p\cdot x + x \cdot p), \sqrt{p^2 + m^2} \, \right] Q_t \, = 2 \tr \;  \frac{p^2}{\sqrt{p^2 + m^2}} \, Q_t \leq 2 \tr \, \sqrt{p^2 + m^2} \, Q_t
\end{equation}
and
\begin{equation}\label{eq:td2}
\begin{split}
i\kappa \tr \; \left[ (p\cdot x + x \cdot p), \left( \frac{1}{|.|} * \rho_{Q_t} \right) \right] \, Q_t = \; & 2 \kappa \tr \; x \cdot \nabla \left( \frac{1}{|.|} * \rho_{Q_t} \right) Q_t  \,
\\ = \; &-2 \kappa \int \rd x \, \rd y \; x \cdot \frac{(x-y)}{|x-y|^3} \, \rho_{Q_t} (y) \, \rho_{Q_t} (x) \\ = \; & 2 \kappa \int \rd x \, \rd y \; y \cdot \frac{(x-y)}{|x-y|^3} \, \rho_{Q_t} (y) \, \rho_{Q_t} (x) \\ = \; &\kappa \int \rd x \, \rd y \, \frac{1}{|x-y|} \, \rho_{Q_t} (y) \, \rho_{Q_t} (x)\,.
\end{split}
\end{equation}
As for the contribution to (\ref{eq:td}) from the term with $R_{Q_t}$, we have
\begin{equation}\label{eq:td3}
\begin{split}
-i \kappa \; \tr \; \Big[ \, &( x \cdot p + p \cdot x) , R_{Q_t} \, \Big] \, Q_t \\ = \; &
-2i \kappa \; \tr \; \left[ \, x \cdot p \, , \, R_{Q_t} \, \right] \, Q_t \\ = \;
&-2i \kappa \; \tr \; \left( x \cdot p \, R_{Q_t} \, Q_t - R_{Q_t} \, x \cdot p \, Q_t \right) \\ = \; &
-2 \kappa \int \rd x \, \rd y \; \left( x \cdot \nabla_x \left(\frac{Q_t (x,y)}{|x-y|} \right) Q_t (y,x) - Q_t (y,x) x \cdot \nabla_x Q_t (x,y) \right) \\
=\; & 2 \kappa \int \rd x \, \rd y \; x \cdot \frac{x-y}{|x-y|^3} |Q_t (x,y)|^2 \\
= \; & \kappa \int \rd x \, \rd y \; \frac{|Q_t (x,y)|^2}{|x-y|}\,.
\end{split}
\end{equation}
{F}rom (\ref{eq:td}), (\ref{eq:td1}), (\ref{eq:td2}), and (\ref{eq:td3}), we obtain (\ref{eq:step2}).

\medskip

\noindent{\it Step 3.} If $Q_t$ is a spherically symmetric solution to (\ref{eq:HF2}), there exists a constant $C$, only depending from $N(Q)$ and $\cL^2 (Q)$ such that
\begin{equation}\label{eq:step3}
\tr \; M Q_t \leq t^2 \cE_{\text{HF}} (Q) + t \, \left(\tr \, (x \cdot p + p \cdot x) Q  + C \right) + \tr \; M Q \, .
\end{equation}

Eq. (\ref{eq:step3}) follows directly from the statements proven in Step 1 and Step 2, integrating twice over time.

\medskip

\noindent{\it Step 4. Conclusion of the proof.} {F}rom the assumption (\ref{eq:tech}) on the initial density $Q$, it follows immediately that \[
\tr \, M Q = \tr x \sqrt{p^2 + m^2} \, x \, Q \leq \tr \, (1 + x^4 + p^2) Q < \infty \] and that \[ \left| \tr \, (x \cdot p + p \cdot x) Q \right| \leq \tr \, (x^2 + p^2) Q < \infty\, . \]
Thus, if $\cE_{\text{HF}} (Q) <0$, (\ref{eq:step3}) contradicts, for $t$ large enough, the non-negativity of the expectation $\tr M Q_t$. This implies immediately that the maximal existence time $T$ for the local solution $Q_t$ is finite. {F}rom the blow-up alternative (see Theorem~\ref{thm:loc}), it follows that there exists $T < \infty$ with $\| Q_t \|_{\cH^{1/2}} = \tr \, (1-\Delta)^{1/2} \, Q_t \to \infty$ as $t \to T^-$.
\end{proof}

In the proof of Theorem \ref{thm:main}, we had to control commutators of the pseudo-differential operator $\sqrt{p^2 + m^2}$ with multiplication operators of the form $f(x)$ (see for example (\ref{eq:pf4})). In this respect, it turns out that the Calderon-Zygmund theory of singular integrals is very useful.
\begin{lemma}\label{lm:stein}
Suppose $m >0$, $p = -i\nabla$. Then, for every $f \in W^{1,\infty} (\bR^3)$, we have
\begin{equation}
\left\| \left[ \sqrt{p^2 + m^2} , f(x) \right] \right\| \leq C \| \nabla f \|_{\infty} \, .
\end{equation}
\end{lemma}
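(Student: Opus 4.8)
The plan is to recognise $[\sqrt{p^2+m^2},f]$ as a singular integral operator and to run it through the Calder\'on--Zygmund/$T(1)$ machinery, the essential point being that this produces a bound depending on $f$ only through $\|\nabla f\|_\infty$. Away from the diagonal, the commutator has the integral kernel
\[
K(x,y)=G(x-y)\bigl(f(y)-f(x)\bigr),
\]
where $G$ is the convolution kernel of $\sqrt{p^2+m^2}$ on $\bR^3$ (for $z\ne0$ an honest function, a constant multiple of $|z|^{-2}K_2(m|z|)$). First I would record the behaviour of $G$: since $\xi\mapsto\sqrt{\xi^2+m^2}$ is a smooth elliptic symbol of order $1$ which in addition extends holomorphically to the strip $\{|\operatorname{Im}\xi|<m\}$, one has $|G(z)|\le C_m|z|^{-4}$ and $|\nabla G(z)|\le C_m|z|^{-5}$ for $|z|\le1$, together with exponential decay of $G$ and $\nabla G$ for $|z|\ge1$. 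Inserting the Lipschitz bound $|f(y)-f(x)|\le\|\nabla f\|_\infty|x-y|$ one gets, for all $x\ne y$,
\[
|K(x,y)|\le C\|\nabla f\|_\infty|x-y|^{-3},\qquad |\nabla_xK(x,y)|+|\nabla_yK(x,y)|\le C\|\nabla f\|_\infty|x-y|^{-4},
\]
so $K$ is a Calder\'on--Zygmund kernel of norm $\le C\|\nabla f\|_\infty$.

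The pointwise kernel bounds alone do not give $L^2$-boundedness, so the core of the argument is to verify the hypotheses of the $T(1)$ theorem, each with a constant $\lesssim\|\nabla f\|_\infty$. Write $T=[\sqrt{p^2+m^2},f]$. The weak boundedness property follows from a routine bump computation: using $T=[\sqrt{p^2+m^2}-m,f]$ (subtracting the constant $m$ is harmless, as it is killed by the commutator, and removes the zeroth-order part of $\sqrt{p^2+m^2}$), one writes $\langle T\phi,\psi\rangle=\int\bigl(f-f(x_0)\bigr)\bigl(\phi\,\overline{(\sqrt{p^2+m^2}-m)\psi}-\overline\psi\,(\sqrt{p^2+m^2}-m)\phi\bigr)$ for normalized bumps $\phi,\psi$ on $B(x_0,R)$, notes that the integrand is supported on $B(x_0,R)$ where $|f-f(x_0)|\le\|\nabla f\|_\infty R$, and bounds $(\sqrt{p^2+m^2}-m)$ on bumps by $\|\nabla f\|_\infty$-independent multiplier estimates. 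For $T1$ and $T^{*}1$ I would use that $\sqrt{p^2+m^2}$ maps the constant $1$ to $m$, so that
\[
T1=\bigl(\sqrt{p^2+m^2}-m\bigr)f=\frac{p^2}{\sqrt{p^2+m^2}+m}\,f=\sum_{j=1}^{3}S_j(-i\partial_j f),\qquad S_j:=\frac{p_j}{\sqrt{p^2+m^2}+m};
\]
each $S_j$ is a Fourier multiplier whose symbol is smooth on all of $\bR^3$ and satisfies the Mikhlin--H\"ormander bounds, hence a Calder\'on--Zygmund operator and in particular bounded $L^\infty\to\mathrm{BMO}$; since $\partial_jf\in L^\infty$ this gives $T1\in\mathrm{BMO}$ with $\|T1\|_{\mathrm{BMO}}\le C\|\nabla f\|_\infty$, and from $T^{*}=-[\sqrt{p^2+m^2},\overline f]$ the term $T^{*}1$ is handled identically. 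The $T(1)$ theorem then yields $\|[\sqrt{p^2+m^2},f]\|_{L^2\to L^2}\le C\|\nabla f\|_\infty$, which is the claim.

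I expect the $L^2$-estimate to be the only real difficulty: $[\sqrt{p^2+m^2},f]$ is a genuine singular integral, whose boundedness is not a soft consequence of the pointwise kernel bounds, so one must either invoke the $T(1)$ theorem as above, or identify the operator with a higher-dimensional Calder\'on commutator (whose $L^2$-boundedness is classical), or decompose $\sqrt{p^2+m^2}$ dyadically in frequency and sum by Cotlar--Stein almost-orthogonality. That no elementary estimate suffices is visible already in the functional-calculus formula $\sqrt{A}=\pi^{-1}\int_0^\infty s^{-1/2}\,\tfrac{s}{A+s}[A,f]\tfrac1{A+s}\,ds$ with $A=p^2+m^2$: if $\nabla f$ is constant this collapses to the bounded Fourier multiplier $\tfrac{\nabla f\cdot p}{2\sqrt{p^2+m^2}}$, but for variable $f$ the two resolvents no longer combine and a termwise operator-norm bound of the $s$-integral diverges, so genuine cancellation must be used.
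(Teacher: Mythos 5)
Your argument is correct, and at the level of mathematical substance it rests on the same Calder\'on--Zygmund machinery that the paper invokes; but the paper does not actually give a proof. It simply cites Stein's \emph{Harmonic Analysis} (the Corollary on p.~309 together with Theorem~3 on p.~294 and the remark in \S 3.3.5), which treats commutators of order-one pseudo-differential (Calder\'on--Zygmund) operators with Lipschitz functions. You, on the other hand, supply a self-contained $T(1)$ argument: kernel estimates for $G(x-y)(f(y)-f(x))$ via the Bessel-type decay of the convolution kernel of $\sqrt{p^2+m^2}$, the weak boundedness property by subtracting $f(x_0)$ and using that $\sqrt{p^2+m^2}-m$ costs one derivative on normalized bumps, and $T1, T^{*}1 \in \mathrm{BMO}$ by writing $T1 = (\sqrt{p^2+m^2}-m)f = \sum_j S_j(-i\partial_j f)$ with $S_j$ a zero-order Mikhlin multiplier. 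All of these steps, with the constants tracked, do yield the stated bound $\le C\|\nabla f\|_\infty$; the key observation that makes $T1$ manageable — that the symbol evaluated at $\xi=0$ is $m$, so $T1=(\sqrt{p^2+m^2}-m)f$ contains only the first-order part — is exactly the right one. Your version is more explicit (and therefore more useful to a reader who wants to check the dependence of the constant on $m$ and $f$) than the paper's bare citation, which as the authors themselves admit does not directly state an effective bound and requires chasing the proof of Theorem~3.

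Two tiny blemishes, neither substantive. First, in the closing heuristic paragraph the displayed formula should read $[\sqrt{A},f]=\pi^{-1}\int_0^\infty s^{-1/2}\,\frac{s}{A+s}[A,f]\frac{1}{A+s}\,ds$, not $\sqrt{A}=\dots$; the left-hand side is the commutator, obtained from the Balakrishnan integral for $\sqrt{A}$ via the resolvent identity $[(A+s)^{-1},f]=-(A+s)^{-1}[A,f](A+s)^{-1}$. Second, for $T^{*}1$ you should note that $f$ is real-valued here (it is a multiplication potential), so $T^{*}=-T$ and $T^{*}1=-T1$; the general complex case is not needed, though your reduction to $\overline f$ would also work.
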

A proof of this lemma can be found in \cite{Stein}; see in particular the Corollary on page~309. The statement of this corollary does not give an effective bound on the norm of the commutator. However, the corollary is based on Theorem 3, on page~294 of \cite{Stein}, whose proof provides the effective control we need (a remark in this sense can be found in the paragraph 3.3.5, on page 305 of \cite{Stein}).

\medskip

Finally, in the next lemma we give a proof of the bound (\ref{eq:int}).
\begin{lemma}\label{lm:intbd} We have
\begin{equation}
\sup_{\lambda \geq 0} \int_{S^2} \rd \hat{y} \frac{|1-\lambda \hat{y}_3|}{|e_3 - \lambda \hat{y} |^3}  < \infty \, .
\end{equation}
\end{lemma}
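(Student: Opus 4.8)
The plan is to collapse the surface integral to an elementary one-variable integral and then bound it uniformly in $\lambda$ by exploiting an algebraic identity relating the numerator to the denominator.

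\emph{Reduction.} I would parametrise $\hat y\in S^2$ in spherical coordinates with polar axis $e_3$, so that $\hat y_3=\cos\theta$ and $|e_3-\lambda\hat y|^2=1+\lambda^2-2\lambda\cos\theta$; since the integrand does not depend on the azimuthal angle, integrating it out and substituting $u=\cos\theta$ gives
\[
\int_{S^2}\rd\hat y\,\frac{|1-\lambda\hat y_3|}{|e_3-\lambda\hat y|^3}=2\pi\int_{-1}^{1}\rd u\,\frac{|1-\lambda u|}{(1+\lambda^2-2\lambda u)^{3/2}} .
\]
It therefore suffices to bound the last integral uniformly for $\lambda\ge 0$; the case $\lambda=0$ is trivial (the integral equals $4\pi$), so I may assume $\lambda>0$.

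\emph{The key identity and a pointwise bound.} The heart of the matter is the pair of elementary identities $1+\lambda^2-2\lambda u=(1-\lambda)^2+2\lambda(1-u)$ and $1-\lambda u=(1-\lambda)+\lambda(1-u)$. Writing $a=|1-\lambda|$ and $v=1-u\in[0,2]$, they yield the pointwise estimate
\[
\frac{|1-\lambda u|}{(1+\lambda^2-2\lambda u)^{3/2}}\le\frac{a+\lambda v}{(a^2+2\lambda v)^{3/2}},
\]
so that the $u$-integral is dominated by $\int_0^2 a(a^2+2\lambda v)^{-3/2}\,\rd v+\int_0^2\lambda v(a^2+2\lambda v)^{-3/2}\,\rd v$. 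Both are computed in closed form by explicit antiderivatives (for the second, substitute $w=a^2+2\lambda v$ and rationalise a difference of square roots): one obtains $\int_0^2 a(a^2+2\lambda v)^{-3/2}\rd v=\lambda^{-1}\big(1-a(a^2+4\lambda)^{-1/2}\big)\le\min\{\lambda^{-1},2a^{-2}\}$ (the bound $2a^{-2}$ coming from dropping $2\lambda v$ in the denominator of the integrand), and, using $a^2+2\lambda+a\sqrt{a^2+4\lambda}\ge 2\lambda$, that $\int_0^2\lambda v(a^2+2\lambda v)^{-3/2}\rd v\le 2(a^2+4\lambda)^{-1/2}\le\min\{2a^{-1},\lambda^{-1/2}\}$.

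\emph{Conclusion, and the main obstacle.} The argument then closes with the observation that $a+\lambda=|1-\lambda|+\lambda\ge 1$, hence $\max\{a,\lambda\}\ge\tfrac12$: if $a\ge\tfrac12$ the two quantities above are at most $8$ and $4$, and if $\lambda\ge\tfrac12$ they are at most $2$ and $\sqrt2$, so in either case the $u$-integral is at most $12$ and the original surface integral at most $24\pi$, uniformly in $\lambda$. The only delicate regime — and where a naive estimate fails — is $\lambda\to 1$, where the integrand has a singularity at $u=1$ coming from the vanishing of $|e_3-\lambda\hat y|$ which is integrable for each fixed $\lambda$ but not uniformly so: the crude bound $|1-\lambda u|\le(1+\lambda^2-2\lambda u)^{1/2}$ only reduces the problem to $\int_{-1}^1(1+\lambda^2-2\lambda u)^{-1}\rd u=\lambda^{-1}\log\frac{1+\lambda}{|1-\lambda|}$, which blows up as $\lambda\to1$. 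Keeping the small factor $|1-\lambda u|\le|1-\lambda|+\lambda(1-u)$ in the numerator, rather than discarding it, is exactly what controls this; the rest is routine explicit integration.
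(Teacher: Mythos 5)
Your proof is correct and takes essentially the same route as the paper: reduce to a one-dimensional integral, split the numerator as $|1-\lambda u|\le|1-\lambda|+\lambda(1-u)$ against the denominator $\bigl((1-\lambda)^2+2\lambda(1-u)\bigr)^{3/2}$ (the paper's variable $x$ is your $\lambda v$), and estimate the two resulting pieces. The only organizational difference is that you bound both pieces uniformly in $\lambda$ via the dichotomy $\max\{|1-\lambda|,\lambda\}\ge\tfrac12$, whereas the paper first disposes of $\lambda\notin[1/2,2]$ by inspection and runs the substitution only on the middle regime; this is cosmetic.
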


\begin{proof}
First, we observe that
\begin{equation}
\sup_{\lambda < 1/2, \lambda > \, 2} \int_{S^2} \rd y \frac{|1-\lambda y_3|}{|e_3 - \lambda y |^3} \leq C
\end{equation}
because, in this regime of $\lambda$, there is no singularity from the denominator. On the other hand, for arbitrary $\lambda \in [1/2, 2]$, we have
\begin{equation*}
\begin{split}
\int_{S^2} \rd \hat{y} \frac{|1-\lambda \hat{y}_3|}{|e_3 - \lambda \hat{y} |^3} = \; & 2 \pi \int_0^{\pi} \rd \theta \, \sin \theta \, \frac{|1-\lambda\cos \theta|}{\left( \lambda^2 \sin^2 \theta + (1-\lambda \cos \theta)^2 \right)^{3/2} } \\
=\; & 2\pi \int_0^{\pi} \rd \theta \, \sin \theta \, \frac{|1-\lambda\cos \theta|}{\left( \frac{\lambda^2-1}{2} + 1 - \lambda \cos \theta\right)^{3/2}} \\ = \; & \frac{2\pi}{\lambda} \int_{1-\lambda}^{1+\lambda} \rd z \, \frac{|z|}{\left(\frac{\lambda^2 -1}{2} + z \right)^{3/2}} =  \frac{2\pi}{\lambda} \int_0^{2\lambda} \rd x \, \frac{|x+(1-\lambda)|}{\left(\frac{(\lambda -1)^2}{2} + x \right)^{3/2}}
\end{split}
\end{equation*}
Therefore
\begin{equation*}
\begin{split}
\int_{S^2} \rd \hat{y} \frac{|1-\lambda \hat{y}_3|}{|e_3 - \lambda \hat{y} |^3}
\leq\; &C \int_0^{2\lambda} \rd x \, \frac{|x|}{ \left(\frac{(\lambda -1)^2}{2} + x \right)^{3/2}} +  C \int_0^{2\lambda} \rd x \, \frac{(1-\lambda)}{ \left(\frac{(\lambda -1)^2}{2} + x \right)^{3/2}}
\\ \leq \; & C \int_0^4 \frac{\rd x}{|x|^{1/2} } + C \int_0^{\infty} \rd x \, \frac{1}{ \left(1 + x \right)^{3/2}} \leq C \,.
\end{split}
\end{equation*}
\end{proof}

\section*{Acknowledgments}
The authors would like to thank Lei Zhang and Enno Lenzmann for very helpful remarks. B. S. is on leave from the University of Cambridge, UK. His research is supported by a Sofja Kovalevskaja Award of the Alexander von Humboldt Foundation.

\bibliographystyle{amsplain}

\end{document}